\newcommand{\rr}{{\cal R}}
\newcommand{\pp}{{\cal P}}
\newcommand{\gateway}{Gateway }
\begin{document}

\author{Sara Miner More \and Pavel Naumov}

\institute{ Department of Mathematics and Computer Science\\ McDaniel College, Westminster, Maryland 21157, USA
\email{ \{smore,pnaumov\}@mcdaniel.edu}}
\title{Functional Dependence of Secrets in a Collaboration Network}

\maketitle

\begin{abstract}
A collaboration network is a graph formed by communication channels between parties. Parties communicate over these channels
to establish secrets, simultaneously enforcing interdependencies between the secrets. The paper studies properties of these interdependencies that are induced by the topology of the network. In previous work, the authors developed a complete logical system for one such property, independence, also known in the information flow literature as nondeducibility.  This work describes a complete and decidable logical system for the functional dependence relation between sets of secrets over a collaboration network. The system extends Armstrong's system of axioms for functional dependency in databases.
\end{abstract}

\section{Introduction}
In this paper, we study properties of interdependencies between pieces of information. We call these pieces {\em secrets} to emphasize the fact that they might be unknown to some parties.  Below, we first describe two relations for expressing interdependencies between secrets.  Next, we discuss these relations in the context of collaboration networks which specify the available communication channels for the parties establishing the secrets.

\subsection{Relations on Secrets}
One of the simplest relations between two secrets is {\em functional dependence}, which we denote by $a\rhd b$. This means that the value of secret $a$ reveals the value of secret $b$. This relation is reflexive and transitive. 
A more general and less trivial form of functional dependence is functional dependence between sets of secrets. If $A$ and $B$ are two sets of secrets, then $A\rhd B$ means that, together, the values of all secrets in $A$ reveal the values of all secrets in $B$. Armstrong~\cite{a74} presented the following sound and complete axiomatization of this relation:
\begin{enumerate}
\item {\em Reflexivity}: $A\rhd B$, if $A\supseteq B$,
\item {\em Augmentation}: $A\rhd B \rightarrow A,C\rhd B,C$,
\item {\em Transitivity}: $A\rhd B \rightarrow (B\rhd C \rightarrow A\rhd C)$,
\end{enumerate}
where here and everywhere below $A,B$ denotes the union of sets $A$ and $B$. The above axioms are known in database literature as Armstrong's axioms \cite[p.~81]{guw09}. Beeri, Fagin, and Howard~\cite{bfh77} suggested a variation of Armstrong's axioms that describe properties of multi-valued dependency.

Not all dependencies between two secrets are functional. For example, if secret $a$ is a pair $\langle x, y\rangle$ and 
secret $b$ is a pair $\langle y, z\rangle$, then there is an interdependency between these secrets in the sense that not every value of secret $a$ is compatible with every value of secret $b$. However, neither $a\rhd b$ nor $b\rhd a$ is necessarily true. If there is no interdependency between two secrets, then we will say that the two secrets are {\em independent}. In other words, secrets $a$ and $b$ are independent if any possible value of secret $a$ is compatible with any possible value of secret $b$. We denote this relation between two secrets by $a\parallel b$. This relation was introduced by Sutherland~\cite{s86} and is also known as {\em nondeducibility} in the study of information flow. Halpern and O'Neill~\cite{ho08} proposed a closely related notion called $f$-secrecy. % More and Naumov~\cite{mn09a} gave a complete axiomatization of the independence relation if secrets are generated over a collaboration network with a fixed topology.

Like functional dependence, independence also can be generalized to relate two sets of secrets. If $A$ and $B$ are two such sets, then $A\parallel B$ means that any consistent combination of values of the secrets in $A$ is compatible with any consistent combination of values of the secrets in $B$. Note that ``consistent combination" is an important condition here, since some interdependency may exist between secrets in set $A$ even while the entire set of secrets $A$ is independent from the secrets in set $B$. A sound and complete axiomatization of this independence relation between sets was given by More and Naumov~\cite{mn10}:
\begin{enumerate}
\item {\em Empty Set}: $\varnothing\parallel A$,
\item {\em Monotonicity}: $A,B\parallel C\rightarrow A\parallel C$,
\item {\em Symmetry}: $A\parallel B \rightarrow B\parallel A$,
\item {\em Public Knowledge}: $A\parallel A \rightarrow (B\parallel C \rightarrow A,B\parallel C)$,
\item {\em Exchange}: $A,B\parallel C\rightarrow (A\parallel B\rightarrow A \parallel B,C)$.
\end{enumerate}
The assumption $A\parallel A$ in the Public Knowledge axiom guarantees that each secret in the set $A$ has a fixed value and, thus, is ``public knowledge". Details can be found in the original work~\cite{mn10}. Essentially the same axioms were shown by Geiger, Paz, and Pearl~\cite{gpp91} to provide a complete axiomatization of the independence relation between sets of random variables in probability theory. 

A complete logical system that combines the independence and functional dependence predicates for {\em single} secrets was described by Kelvey, More, Naumov, and Sapp~\cite{kmns10}:
\begin{enumerate}
\item {\em Reflexivity:} $a\rhd a$,
\item {\em Transitivity:} $a\rhd b \rightarrow (b\rhd c \rightarrow a \rhd c)$,
\item {\em Symmetry:} $a \parallel b \rightarrow b \parallel a$,
\item {\em Universal Independence:} $a\parallel a \rightarrow a \parallel b$, 
\item {\em Universal Dependence:} $a\parallel a \rightarrow b \rhd a$,
\item {\em Substitution:} $a\parallel b \rightarrow (b\rhd c \rightarrow a\parallel c)$,
\end{enumerate}
where $a,b$ and $c$, unlike $A,B$ and $C$ above, stand for single secrets, not sets of secrets.

\subsection{Secrets in Collaboration Networks}

So far, we have assumed that the values of secrets are determined a priori. In the physical world, however, secret values are often generated, or at least disseminated, via interaction between several parties. Quite often such interaction happens over a fixed network. For example, in social networks, interaction between nodes happens along connections formed by friendship, kinship, financial relationship, etc.  In distributed computer systems, interaction happens over computer networks. Exchange of genetic information happens along the edges of the genealogical tree. Corporate secrets normally flow over an organization chart. In cryptographic protocols, it is often assumed that values are transmitted over well-defined channels. On social networking websites, information is shared between ``friends". Messages between objects on an UML interaction diagram are sent along connections defined by associations between the classes of the objects.

We attempt to capture this type of information flow over a graph by the notion of a {\em collaboration network}. Such a network consists of several parties connected by communication channels that form a network with a fixed topology. A pair of parties connected by a channel uses this channel to establish a secret. If the pairs of parties establish their secrets completely independently from other pairs, then possession of one or several of these secrets reveals no information about the other secrets. Assume, however, that secrets are not picked completely independently. Instead, each party with access to multiple channels may enforce some desired interdependency between the secrets it shares with other parties. These ``local" interdependencies between secrets known to a single party may result in a ``global" interdependency between several secrets, not all of which are known to any single party. Given the fixed topology of the collaboration network, we study what global interdependencies between secrets may exist in the system.
\begin{figure}[htbp]
   \centering
	\scalebox{.5}{\includegraphics{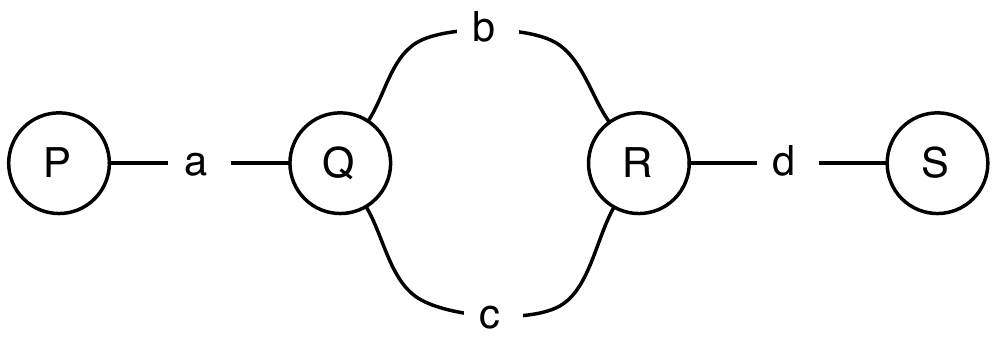}}
   \caption{Collaboration network $N_1$.}
   \label{computation_graph}
\end{figure}

Consider, for example, the collaboration network $N_1$ depicted in Figure~\ref{computation_graph}.  Suppose that the parties collaborate according to the following protocol.  Party $P$ picks a random value $a$ from $\{0,1\}$ and sends it to party $Q$. Party $Q$ picks values $b$ and $c$ from $\{0,1\}$ in such a way that $a=b+c \mod 2$ and sends both of these values to $R$. Party $R$ computes $d=b+c \mod 2$ and sends value $d$ to party $S$. In this protocol, it is clear that the values of $a$ and $d$ will always match. Hence, for this specific protocol, we can say that $a\rhd d$, but at the same time $a\parallel b$ and $a\parallel c$.

Note that in the above example, all channels transmit secret messages in one direction and, thus, the channel network forms a directed graph. However, in the more general setting, two parties might establish the value of a secret through a dialog over their communication channel, with messages traveling in both directions. Thus, in general, we will not assume any specific direction on a channel.

%We now turn to formal definitions of protocol and functional dependence.

\subsection{Data Streams and Collaboration Networks}
In this section, we will consider a more sophisticated example of collaboration network from network coding theory.

\begin{figure}[htbp]
   \centering
	\scalebox{.5}{\includegraphics{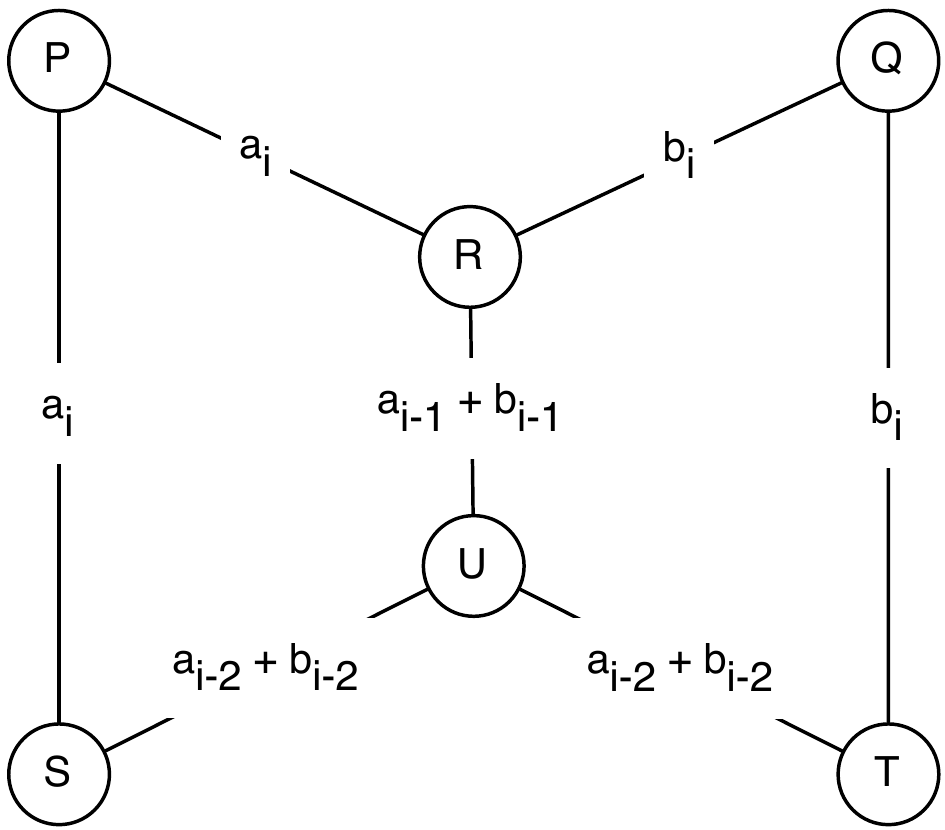}}
   \caption{Butterfly network $N_2$.}
   \label{butterfly_graph}
\end{figure}

Network coding studies methods of attaining maximum information flow in a network where channels have limited throughput. 
A standard example of network coding is given in terms of the butterfly network \cite{acly00} depicted in Figure~\ref{butterfly_graph} as $N_2$. Suppose that parties $P$ and $Q$ generate streams of 1-bit messages $a_1,a_2,\dots$ and $b_1,b_2,\dots$, respectively, with rate one message per second. They need to transmit both sequences of messages to both $S$ and $T$ using only the available communication channels. Each channel's throughput is one bit per second. Note that any protocol over $N_2$ that attempts to independently transmit streams of messages $\{a_i\}_i$ and $\{b_i\}_i$ will fail due to the limited combined capacity of the three channels connecting parties $P$, $Q$, and $R$, with parties $S$, $T$, and $U$.  

%The desired result, however, can be easily achieved by a ``network coding" protocol that combines the two streams.  Under this protocol, at each time $i$, party $P$ transmits bit $a_i$ to both $S$ and $R$. Similarly, party $Q$ transmits bit $b_i$ to both $T$ and $R$. At each time $i>1$, party $R$ already possesses bits $a_{i-1}$ and $b_{i-1}$, so can compute the bit $a_{i-1} + b_{i-1} \mod 2$ and send it to $U$. One second later, party $U$ forwards this bit to $S$ and $T$. Note that party $S$ receives each bit value $a_i$ directly from party $P$, and after receiving $a_i+b_i \bmod 2$ from $U$ two seconds later, $S$ can reconstruct the value of $b_i$, since $$a_i + (a_i+b_i) \bmod 2 = a_i.$$ Similarly, party $T$ can reconstruct the Boolean value $a_i=c''_{i+2}+ b_i\mod 2$.  

The desired result, however, can be easily achieved by a ``network coding" protocol that combines the two streams. 
Under this protocol, at time $1$, party $P$ transmits bit $a_1$ to both $S$ and $R$. At the same time, party $Q$ transmits bit $b_1$ to both $T$ and $R$. At time 2, party $R$ already possesses bits $a_{1}$ and $b_{1}$, so can compute the bit $a_{1} + b_{1} \bmod 2$ and send it to $U$. At time 3, party $U$ forwards this bit to $S$ and $T$. Note that party $S$ received bit $a_1$ directly from party $P$, and after receiving $a_1+b_1 \bmod 2$ from $U$ one second later, $S$ can reconstruct the value of $b_1$, since 
$$a_1+ (a_1+b_1)  \equiv b_1 \pmod 2.$$ Similarly, party $T$ receives $b_1$ directly from $Q$, and can reconstruct the Boolean value $a_1$ after receiving the sum from $U$.  For each time $i>1$, the propagation of bits $a_i$ and $b_i$ is carried out in a similar fashion.

The coding protocol described above can be viewed as a protocol over a collaboration network if the whole stream of messages sent over a single channel in the coding network is interpreted as a single message in the collaboration network.  The computation rules
of the coding protocol are viewed as the local conditions of the collaboration network. For example, if the notation $m_{XY}$ denotes the entire secret value shared between parties $X$ and $Y$, and $[m_{XY}]_i$ denotes its $i$-th bit, then, for example, the local condition at party $R$ can be described as 
$$ \forall i\ge 1\; \left([m_{RU}]_{i+1}\equiv [m_{PR}]_i + [m_{QR}]_i \pmod 2\right).$$ 
The desired properties of the protocol can be stated in our notation as 
%$$\forall i,~~[m_{PS}]_{i},[m_{US}]_{i+2}\rhd [m_{QR}]_i$$ 
$$m_{PS},m_{US}\rhd m_{QR},m_{PR}$$ %perhaps include m_{PR} on RHS too?
and
$$m_{QT},m_{UT}\rhd m_{QR},m_{PR}.$$ %perhaps include m_{PR} on RHS too?
Other network protocols that deal with data streams, such as, for example, the alternating bit protocol~\cite{bsw69}, can similarly be interpreted in terms of collaboration networks.

\subsection{Network Topology}

The independence and functional dependence examples we have given so far are for a single protocol, subject to a particular set of local interdependencies between secrets. If the topology remains fixed, but the protocol is changed, then secrets which were previously functionally dependent may no longer be so, and vice versa. For example, for network $N_1$ above, the claim $a\rhd d$ will no longer be true if, say, party $R$ switches from enforcing the local condition $d=b+c \mod 2$ to enforcing the local condition $d=b$. In this paper, we study properties of relations between secrets that follow from the topological structure of the network of channels, no matter which specific protocol is used, as long as it is specified in terms of interdependencies between adjacent channels. Examples of such properties for network $N_1$ are $(a\rhd d) \rightarrow (b,c\rhd d)$ and $(a\parallel b,c) \rightarrow (a\parallel d)$. 

In an earlier work~\cite{mn09a}, we gave a complete axiomatic system for the independence relation between single secrets over a collaboration network. In fact, we axiomatized a slightly more general relation $a_1\parallel a_2\parallel \dots \parallel a_n$ between multiple single secrets.  One can also consider collaboration networks in which a secret is known to any arbitrary subset of parties, rather than a pair of parties.  In a recent paper~\cite{mn10clima}, we generalized the earlier independence results~\cite{mn09a} to this ``hypergraph" setting.
 
In this article, we turn our attention to functional dependence in (non-hypergraph) collaboration networks.  Here, we present a sound and complete logical system that describes the properties of the functional dependence relation $A \rhd B$ between sets of secrets over any fixed network topology $N$.  This system includes Armstrong's {\em Reflexivity}, {\em Augmentation}, and {\em Transitivity} axioms.  To these, we add a {\em \gateway} axiom. The above-mentioned statement $(a\rhd d) \rightarrow (b,c\rhd d)$ is an instance of this new axiom for network $N_1$.  We prove additional statements about different collaboration networks in Section~\ref{sec:example}. 

From the point of view of verification of a specific protocol, the logical calculus introduced in this paper allows us to separate arguments about properties of the protocol itself from the topological properties of the underlying network. For example, since $(a\rhd d) \rightarrow (b,c\rhd d)$ is a property of network $N_1$, if the designers of a particular cryptographic protocol over $N_1$ can guarantee that the value of $d$ can not be reconstructed from the values of $b$ and $c$, then using the axioms of our logical system, one can prove that the value of $d$ is not revealed by the value of $a$ for the same protocol.

\section{Formal Setting}

Throughout this paper, we assume a fixed  infinite alphabet of variables $a, b,\dots$, which we refer to as ``secret variables". By a network topology, we mean a finite graph whose edges, or ``channels", are labeled by secret variables.  We allow loop edges and multiple edges between the same pair of parties. The set of all channels of network $N$ will be denoted by $Ch(N)$.  One channel may have (finitely) many labels, but the same label can be assigned to only one channel. Given this, we will informally refer to ``the channel labeled with $a$" as simply ``channel $a$". 

\begin{definition}\label{}
A semi-protocol over a network $N$ is a pair $\langle  V, L\rangle$ such that
\begin{enumerate}
\item $V(c)$ is an arbitrary set of ``values" for each channel $c \in Ch(N)$,
\item $L = \{L_p\}_{p \in P}$ is a family of predicates, indexed by set $P$ of all parties of the network $N$, which we call ``local conditions".  If $c_1,\dots c_k$ is the list of all channels incident with party $p$, then  $L_p$ is a predicate on $V(c_1)\times\dots\times V(c_k)$.
\end{enumerate}
\end{definition}

\begin{definition}\label{}
A run of a semi-protocol $\langle V, L\rangle$ is a function $r$ such that 
\begin{enumerate}
\item $r(c)\in V(c)$ for any channel $c \in Ch(N)$, 
\item If $c_1,\dots c_k$ is the list of all channels incident with a party $p\in P$, then predicate $L_p(r(c_1),\dots,r(c_k))$ is true.
\end{enumerate}
\end{definition}

\begin{definition}\label{protocol}
A protocol is any semi-protocol that has at least one run.
\end{definition}
The set of all runs of a protocol $\pp$ is denoted by $\rr(\pp)$.

\begin{definition}\label{rank}
A protocol $\pp =\langle V, L\rangle$ is called finite if the set $V(c)$ is finite for every $c\in Ch(N)$. 
\end{definition}

We conclude this section with the key definition of this paper. It is the definition of functional
dependence between sets of channels. 

\begin{definition}\label{dependence}
A set of channels $A=\{a_1,\dots,a_n\}$ functionally determines a set of channels $B=\{b_1,\dots,b_k\}$, with respect to a fixed protocol $\pp$, if
$$
\forall r,r'\in \rr(\pp)\; \left(\bigwedge_{i\le n} r(a_i)=r'(a_i) \rightarrow \bigwedge_{j\le k} r(b_j)=r'(b_j)\right).
$$
\end{definition}

We find it convenient to use the notation $f\equiv_X g$ if functions $f$ and $g$ are equal on every argument from set $X$. 
Using this notation, we can say that a set of channels $A$ functionally determines a set of channels $B$ if 
$$\forall r,r'\in \rr(\pp)\;(r\equiv_A r' \rightarrow r\equiv_B r').$$

\section{Language of Secrets}

By $\Phi(N)$, we denote the set of all properties of secrets in collaboration network $N$ definable through the predicate $A\rhd B$.  More formally, $\Phi(N)$ is a minimal set of formulas defined recursively as follows: (i)  for any two finite sets of secret variables (labels of channels in network $N$) $A$ and $B$, formula $A\rhd B$ is in $\Phi(N)$, (ii) the false constant $\bot$ is in set $\Phi(N)$, and (iii) for any formulas $\phi$ and $\psi \in \Phi(N)$, the implication $\phi\rightarrow \psi$ is in $\Phi(N)$. As usual, we assume that conjunction, disjunction, and negation are defined through $\rightarrow$ and $\bot$. 

Next, we define a relation $\vDash$ between a protocol and a formula from $\Phi(N)$.  Informally, $\pp \vDash\phi$ means that formula $\phi$ is true under protocol $\pp$. 
\begin{definition}\label{}
For any protocol $\pp$ over a network $N$, and any formula $\phi\in\Phi(N)$, we define the relation $\pp \vDash\phi$ recursively as follows:
\begin{enumerate}
\item $\pp \nvDash \bot$,
\item $\pp \vDash A\rhd B$ if the set of channels $A$ functionally determines set of channels $B$ under protocol $\pp$,
\item $\pp \vDash \phi_1\rightarrow\phi_2$ if $\pp \nvDash \phi_1$ or $\pp \vDash \phi_2$.
\end{enumerate}
\end{definition}
In this paper, we study the formulas $\phi \in \Phi(N)$ that are true under {\em any} protocol $\pp$ over fixed network $N$. 
Below we describe a formal logical system for such formulas.  This system, like earlier systems defined by Armstrong~\cite{a74}, More and Naumov~\cite{mn09,mn09a,mn10clima} and by Kelvey, More, Naumov, and Sapp~\cite{kmns10},  belongs to the set of deductive systems that capture properties of secrets.  In general, we refer to such systems as {\em logics of secrets}. Since this paper is focused on only one such system, here we call it {\em the Logic of Secrets}. Before stating the axioms of the Logic of Secrets, we need one more technical definition.

By a path in a network, we mean any undirected path in the graph formed by the channels of the network.
We say that a set of channels $G$ is a {\em gateway} between sets of channels $A$ and $B$ if any path from $A$ to $B$ goes through $G$.  We state this more formally below:

\begin{definition}\label{gateway}
Let $A$, $B$, and $G$ be any three sets of channels in $Ch(N)$. Set $G$ is a gateway between sets $A$ and $B$ if
for any path $(c_1,\dots,c_n)$ in network $N$,
\begin{equation}\label{path cond}
c_1\in A \wedge c_n\in B \rightarrow \bigvee_{1\le i\le n} c_i\in G.
\end{equation}
\end{definition}
Note that in the above definition sets $A$, $B$, and $G$ are not necessarily disjoint. Thus, for example, for any set $A\subseteq Ch(N)$,
set $A$ is a gateway between $A$ and itself. Also, note that the empty set is a gateway between any two components of the network that are not connected to one another.

\section{Axioms}

For a fixed collaboration network $N$, the Logic of Secrets, in addition to propositional tautologies and the Modus Ponens inference rule, contains the following axioms:

\begin{enumerate}
\item {\em Reflexivity}: $A\rhd B$, if $A\supseteq B$,
\item {\em Augmentation}: $A\rhd B \rightarrow A,C\rhd B,C$,
\item {\em Transitivity}: $A\rhd B \rightarrow (B\rhd C \rightarrow A\rhd C)$,
\item {\em \gateway}: $A\rhd B\rightarrow G\rhd B$, if $G$ is a gateway between sets $A$ and $B$ in network $N$.
\end{enumerate}
Recall that the first three of these axioms were introduced by Armstong~\cite{a74}, and they are known in database theory as Armstrong's axioms \cite[p.~81]{guw09}. The soundness of all four axioms will be shown in Section~\ref{sec:sound}.

We use the notation $X \vdash_N \Phi$ to state that formula $\Phi$ is derivable from the set of formulas $X$ in the Logic of Secrets for network $N$.

\section{Examples of Proofs}\label{sec:example}
We will give three examples of proofs in the Logic of Secrets. Our first example refers to square collaboration network $N_3$ depicted in Figure~\ref{square_graph}.

\begin{figure}[htbp]
   \centering
	\scalebox{.5}{\includegraphics{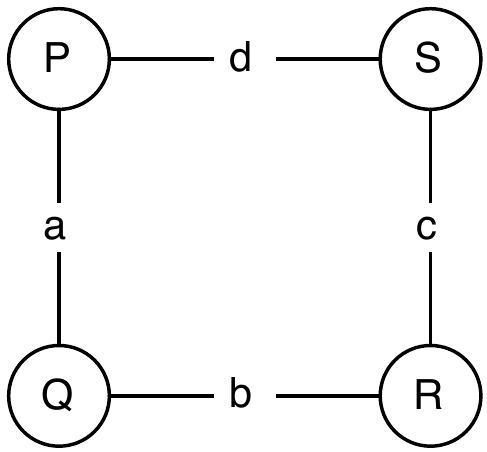}}
   \caption{Network $N_3$.}
   \label{square_graph}
\end{figure}

%\begin{floatingfigure}{1.7in}
%\begin{center}
%\vspace{-0.12in}
%\scalebox{.5}{\includegraphics{square_graph.pdf}} 
%\vspace{-0.1in}
%\footnotesize\caption{Network $N_2$}\label{square_graph}
%\end{center}
%\end{floatingfigure}

\begin{proposition}\label{square example theorem}
$\vdash_{N_3} (a\rhd c) \wedge (b\rhd d) \rightarrow (a\rhd d) \wedge (b\rhd c)$.
\end{proposition}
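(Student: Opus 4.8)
The plan is to discharge the antecedent $(a\rhd c)\wedge(b\rhd d)$, derive each conjunct of the consequent, $a\rhd d$ and $b\rhd c$, by a separate chain of axioms, and then reassemble the implication by pure propositional reasoning together with Modus Ponens. Each chain follows the same template: use the topology of $N_3$ to exhibit a gateway, apply the \gateway axiom to one of the two hypotheses so as to move a functional dependence onto a two-channel antecedent, and then contract that antecedent back to a single channel using Augmentation and Transitivity.

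For the first conjunct I would argue as follows. Reading off $N_3$, the pair $\{a,c\}$ is a gateway between $b$ and $d$: the square offers exactly two arcs from the channel $b$ to the channel $d$, one meeting $c$ and the other meeting $a$, so every path from $b$ to $d$ passes through $\{a,c\}$. The \gateway axiom applied to the hypothesis $b\rhd d$ therefore yields $a,c\rhd d$. Augmenting the other hypothesis $a\rhd c$ by $a$ gives $a\rhd a,c$, and Transitivity applied to $a\rhd a,c$ and $a,c\rhd d$ produces $a\rhd d$. The second conjunct $b\rhd c$ is obtained by the mirror argument under the automorphism of $N_3$ that swaps $a\leftrightarrow b$ and $c\leftrightarrow d$ (reflection in the appropriate diagonal), which also fixes the statement: now $\{b,d\}$ is a gateway between $a$ and $c$, the \gateway axiom turns $a\rhd c$ into $b,d\rhd c$, Augmentation turns $b\rhd d$ into $b\rhd b,d$, and Transitivity gives $b\rhd c$.

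The one place demanding care is the verification of the two gateway claims. Since $N_3$ is a cycle, no single channel can separate two others---there are always two arcs connecting them---so neither gateway can be a singleton; they must be the pairs $\{a,c\}$ and $\{b,d\}$, and one has to confirm that both arcs of the square are genuinely blocked by the chosen pair. Once these topological facts are in hand, the remainder is a routine manipulation of Armstrong's axioms, the only subtlety being to augment by exactly the channel that makes the antecedent produced by Augmentation coincide with the antecedent delivered by the \gateway step, so that Transitivity applies.
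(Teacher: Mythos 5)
Your proof is correct and follows essentially the same route as the paper's: the paper likewise applies the \gateway axiom to $b\rhd d$ using the gateway $\{a,c\}$, augments $a\rhd c$ to $a\rhd a,c$, and concludes $a\rhd d$ by Transitivity, handling the second conjunct by the same symmetry you invoke. The only difference is that you write out the mirrored derivation of $b\rhd c$ explicitly, whereas the paper dispatches it with a one-line appeal to the network's symmetry.
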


\begin{proof}
Due to the symmetry of the network, it is sufficient to show that $(a\rhd c) \wedge (b\rhd d) \rightarrow a\rhd d$.
Note that $\{a,c\}$ is a gateway between sets $\{b\}$ and $\{d\}$. Thus, by the \gateway axiom,
$b\rhd d$ implies $(a,c\rhd d)$. On the other hand, by the Augmentation axiom, the assumption $a\rhd c$ yields $(a\rhd a, c)$. 
By the Transitivity axiom,  $(a \rhd a,c)$ and $(a,c \rhd d)$ imply $a\rhd d$. \qed
\end{proof}
For the second example, consider the linear network $N_4$ shown in Figure~\ref{linear_graph_5edges}.
\begin{figure}[htbp]
   \centering
	\scalebox{.5}{\includegraphics{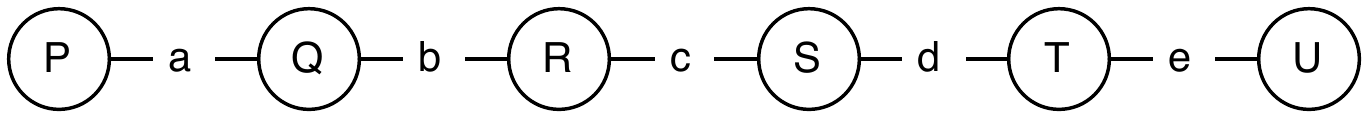}}
   \caption{Network $N_4$.}
   \label{linear_graph_5edges}
\end{figure}

\begin{proposition}\label{linear example theorem}
$\vdash_{N_4} (a\rhd d) \wedge (e\rhd c) \rightarrow b\rhd c$.
\end{proposition}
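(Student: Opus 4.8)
The plan is to reduce the two hypotheses to a common intermediate channel and then chain them with \emph{Transitivity}, exactly in the spirit of the proof of Proposition~\ref{square example theorem}. Reading the edges of $N_4$ in their linear order $a,b,c,d,e$, the key observation is that in a path graph a single interior channel separates the channels lying on either side of it; concretely, $\{b\}$ is a gateway between $\{a\}$ and $\{d\}$, and $\{d\}$ is a gateway between $\{e\}$ and $\{c\}$.

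First I would apply the \gateway axiom to the first hypothesis: since $\{b\}$ is a gateway between $\{a\}$ and $\{d\}$, the assumption $a\rhd d$ yields $b\rhd d$. Next I would apply the \gateway axiom to the second hypothesis: since $\{d\}$ is a gateway between $\{e\}$ and $\{c\}$, the assumption $e\rhd c$ yields $d\rhd c$. Finally, \emph{Transitivity} applied to $b\rhd d$ and $d\rhd c$ gives the desired $b\rhd c$.

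The only real content is the choice of gateways, so the main obstacle is topological rather than deductive: one must spot that the two given dependencies $a\rhd d$ and $e\rhd c$ can both be ``pulled inward'' onto the single channel $d$, turning them into $b\rhd d$ and $d\rhd c$, whose shared endpoint enables the transitivity step. Verifying the gateway claims themselves is immediate in a linear network, since every walk joining two edges must traverse each edge situated between them; in particular no path from $a$ to $d$ can avoid $b$, and no path from $e$ to $c$ can avoid $d$.
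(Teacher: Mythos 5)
Your proof is correct and takes essentially the same approach as the paper: two applications of the Gateway axiom plus one application of Transitivity, based on the same observation that interior channels of the linear network separate the channels on either side. The only difference is the order of steps --- the paper first derives $d\rhd c$, chains it with the hypothesis $a\rhd d$ to get $a\rhd c$, and then applies Gateway with $\{b\}$ between $\{a\}$ and $\{c\}$, whereas you apply Gateway to both hypotheses first (obtaining $b\rhd d$ and $d\rhd c$) and chain at the end; the two orderings are interchangeable.
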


\begin{proof}
We begin with the assumption that $e\rhd c$.  Since $\{d\}$ is a gateway between sets $\{e\}$ and $\{c\}$, by the \gateway axiom, $d\rhd c$.  Next, using the assumption that $a \rhd d$, the Transitivity axiom yields $a \rhd c$.  Finally, we note that $\{b\}$ is a gateway between $\{a\}$ and $\{c\}$, and apply the \gateway axiom once again to conclude that $b \rhd c$.
\qed \end{proof}

Note that the second hypothesis in the example above is significant.  Indeed, imagine a protocol on $N_4$ where $V(d) = \{0\}$, the set of values allowed on all other channels is $\{0,1\}$, and the local condition at each party $p$ is simply $L_p \equiv true$.  In this protocol, $a \rhd d$ since the value of $a$ on any run clearly determines the (constant) value of $d$.   However, the value of $b$ is of no help in determining the value of $c$, so the conclusion $b \rhd c$ does not hold.

\begin{figure}[htbp]
\begin{center}
\scalebox{.5}{\includegraphics{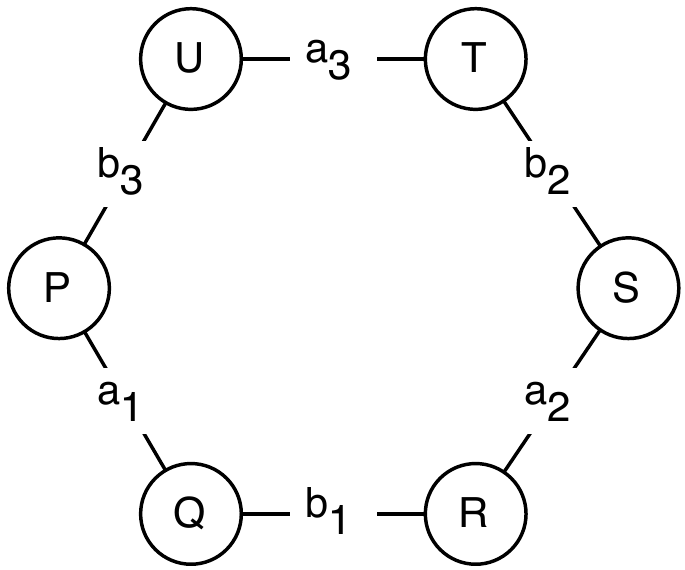}}
\caption{Network $N_4$.}
\label{benzene_graph}
\end{center}
\end{figure}
As our final example, we prove a property of hexagonal collaboration network $N_5$ shown in Figure~\ref{benzene_graph}.

\begin{proposition}\label{}
$\vdash_{N_5}(a_1,a_2\rhd a_3)\wedge(a_2,a_3\rhd a_1)\wedge(a_3,a_1\rhd a_2)\rightarrow b_1,b_2,b_3\rhd a_1,a_2,a_3$.
\end{proposition}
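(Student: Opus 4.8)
The plan is to prove the three ``single-target'' dependences $b_1,b_2,b_3\rhd a_k$ for each $k\in\{1,2,3\}$ and then assemble them into the desired conclusion using only Armstrong's axioms. All of the topological content is concentrated in one observation about $N_5$: since the three channels $b_1,b_2,b_3$ are exactly the edges that alternate with the pairwise non-adjacent channels $a_1,a_2,a_3$ around the hexagon, deleting $b_1,b_2,b_3$ leaves $a_1$, $a_2$, and $a_3$ in three distinct components. Consequently, for each $k$ the set $\{b_1,b_2,b_3\}$ is a gateway between $\{a_i,a_j\}$ and $\{a_k\}$, where $\{i,j,k\}=\{1,2,3\}$: any path joining a channel in $\{a_i,a_j\}$ to $a_k$ must traverse at least one of the $b$'s.

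With this in hand, I would apply the \gateway axiom to each hypothesis. From the hypothesis $a_1,a_2\rhd a_3$ and the fact that $\{b_1,b_2,b_3\}$ is a gateway between $\{a_1,a_2\}$ and $\{a_3\}$, the \gateway axiom immediately yields $b_1,b_2,b_3\rhd a_3$. The two analogous applications, to $a_2,a_3\rhd a_1$ and to $a_3,a_1\rhd a_2$, give $b_1,b_2,b_3\rhd a_1$ and $b_1,b_2,b_3\rhd a_2$, respectively.

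Finally, I would merge the three conclusions using the standard union rule, which is derivable from Armstrong's axioms: from $A\rhd B$ one gets $A\rhd A,B$ by Augmentation (taking the added set to be $A$), from $A\rhd C$ one gets $A,B\rhd B,C$ by Augmentation (taking the added set to be $B$), and Transitivity then yields $A\rhd B,C$. Applying this first to $b_1,b_2,b_3\rhd a_1$ and $b_1,b_2,b_3\rhd a_2$ gives $b_1,b_2,b_3\rhd a_1,a_2$, and applying it once more together with $b_1,b_2,b_3\rhd a_3$ produces the desired $b_1,b_2,b_3\rhd a_1,a_2,a_3$.

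The routine calculations (the two union-rule assemblies) are harmless; the one place requiring care is the gateway verification, and in particular the realization that Reflexivity alone is useless here. Since $\{b_1,b_2,b_3\}$ is a gateway between $\{a_i,a_j\}$ and $\{a_k\}$ but \emph{not} between $\{a_k\}$ and itself (the length-one path $(a_k)$ avoids every $b$), one genuinely needs the cyclic hypotheses: each hypothesis places two of the $a$-channels on the left, and it is precisely that pair which $\{b_1,b_2,b_3\}$ separates from the third. I expect confirming this separation claim for the specific topology of $N_5$ to be the main, though still modest, obstacle.
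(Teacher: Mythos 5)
Your proof is correct and follows essentially the same strategy as the paper's: apply the Gateway axiom once per hypothesis to trade each pair $\{a_i,a_j\}$ for $b$-channels, then merge the three resulting dependences using Augmentation and Transitivity. The only difference is cosmetic --- the paper uses the minimal two-element gateways $\{b_1,b_3\}$, $\{b_1,b_2\}$, $\{b_2,b_3\}$ and consequently needs a slightly more intricate Augmentation/Transitivity merge, whereas you use the full set $\{b_1,b_2,b_3\}$ (still a valid gateway, being a superset of a gateway) so that all three conclusions share the left-hand side $b_1,b_2,b_3$ and the standard derived union rule finishes the job.
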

\begin{proof}
Note that $\{b_1,b_3\}$ is a gateway between sets $\{a_2,a_3\}$ and $\{a_1\}$. Thus, by the Gateway axiom,
$(a_2,a_3\rhd a_1)\rightarrow (b_1,b_3\rhd a_1)$. Hence, by the assumption, $(a_2,a_3\rhd a_1)$, we have that $(b_1,b_3\rhd a_1)$. Similarly one
can show that $(b_1,b_2\rhd a_2)$ and $(b_2,b_3\rhd a_3)$ using the assumptions $(a_3,a_1\rhd a_2)$ and $(a_1,a_2\rhd a_3)$.

Consider statements $(b_1,b_3\rhd a_1)$ and $(b_1,b_2\rhd a_2)$. By the Augmentation axiom, they, respectively, imply that
$(b_1,b_2,b_3\rhd a_1,b_1,b_2)$ and $(a_1,b_1,b_2\rhd a_1,a_2)$. Thus, by the Transitivity axiom, $(b_1,b_2,b_3\rhd a_1,a_2)$.

Now consider $(b_1,b_2,b_3\rhd a_1,a_2)$ and statement $(b_2,b_3\rhd a_3)$, established earlier.
By the Augmentation axiom, they, respectively, imply that
$(b_1,b_2,b_3\rhd a_1,a_2,b_2,b_3)$ and $(a_1,a_2,b_2,b_3\rhd a_1,a_2,a_3)$. Thus, by the Transitivity axiom, 
$(b_1,b_2,b_3\rhd a_1,a_2,a_3)$.
\qed
\end{proof}

\section{Soundness}\label{sec:sound}
In this section, we demonstrate the soundness of each of the four axioms in the Logic of Secrets.

\begin{theorem}[Reflexivity]\label{}
$\pp \vDash A\rhd B$, for any protocol $\pp$ and any $B\subseteq A$.
\end{theorem}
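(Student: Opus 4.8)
The plan is to prove the Reflexivity theorem directly from Definition~\ref{dependence}, since this is the simplest of the four soundness claims and requires no appeal to the network topology. The goal is to show that for any protocol $\pp$ and any sets of channels with $B\subseteq A$, we have $\pp \vDash A\rhd B$. Unfolding the semantics, this means verifying that for all runs $r,r'\in\rr(\pp)$, the condition $r\equiv_A r'$ implies $r\equiv_B r'$.

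First I would fix two arbitrary runs $r,r'\in\rr(\pp)$ and assume the antecedent $r\equiv_A r'$, that is, $r(c)=r'(c)$ for every channel $c\in A$. The single key step is the observation that since $B\subseteq A$, every channel of $B$ is also a channel of $A$, so the agreement of $r$ and $r'$ on all of $A$ immediately yields their agreement on the subset $B$. Formally, for each $c\in B$ we have $c\in A$ by the inclusion $B\subseteq A$, hence $r(c)=r'(c)$ by the assumption, which is exactly $r\equiv_B r'$.

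There is essentially no main obstacle here: the statement is a direct monotonicity property of the relation ``agree on set $X$'' with respect to set inclusion, and the proof is a one-line quantifier manipulation. The only thing to be careful about is making the logical structure explicit, namely that $r\equiv_A r'$ is a conjunction over the larger index set $A$ and that restricting a conjunction to a sub-index set $B\subseteq A$ is valid. Since the definition quantifies over all pairs of runs and the argument holds for an arbitrary such pair, the universal conclusion follows, establishing $\pp\vDash A\rhd B$.
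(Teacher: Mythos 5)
Your proof is correct and is essentially identical to the paper's own argument: both unfold Definition~\ref{dependence}, fix two runs agreeing on $A$, and observe that agreement on $A$ restricts to agreement on any subset $B\subseteq A$. Your version merely spells out the quantifier bookkeeping that the paper leaves implicit.
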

\begin{proof}
Consider any two runs $r,r'\in \rr(\pp)$ such that $r\equiv_A r'$. Thus $r\equiv_B r'$ for any $B\subseteq A$.\qed
\end{proof}

\begin{theorem}[Augmentation]\label{}
$\pp \vDash A\rhd B \rightarrow A,C\rhd B,C$, for any protocol $\pp$ and any sets of channels $A$, $B$, and $C$.
\end{theorem}
\begin{proof}
Assume $\pp \vDash A\rhd B$ and consider any two runs $r,r'\in \rr(\pp)$ such that $r\equiv_{A,C}r'$.
By our assumption, $r\equiv_{B}r'$. Therefore, $r\equiv_{B,C}r'$. \qed
\end{proof}

\begin{theorem}[Transitivity]\label{}
$\pp \vDash A\rhd B \rightarrow (B\rhd C \rightarrow A\rhd C)$, for any protocol $\pp$ and any sets of channels $A$, $B$, and $C$.
\end{theorem}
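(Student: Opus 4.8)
The plan is to prove Transitivity directly from Definition~\ref{dependence}, following exactly the same pattern as the Reflexivity and Augmentation proofs already given: assume the hypotheses, pick two arbitrary runs that agree on the left-hand set, and chase the agreement through to the right-hand set. This axiom is about ordinary functional dependence and makes no reference to network topology, so no gateway argument is needed.

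Concretely, I would assume $\pp \vDash A\rhd B$ and $\pp \vDash B\rhd C$, and then show $\pp \vDash A\rhd C$. To this end, consider any two runs $r,r'\in\rr(\pp)$ with $r\equiv_A r'$. The first hypothesis gives $r\equiv_B r'$, and then the second hypothesis, applied to the same pair of runs, gives $r\equiv_C r'$. Since $r$ and $r'$ were arbitrary runs agreeing on $A$, this establishes that $A$ functionally determines $C$, i.e.\ $\pp\vDash A\rhd C$.

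There is no real obstacle here: the argument is a one-line transitive composition of the $\equiv_X$ relation. The only point requiring any care is the logical structure of the statement being proved. The formula is a nested implication $A\rhd B\rightarrow(B\rhd C\rightarrow A\rhd C)$, so to verify $\pp\vDash$ of this formula I must unfold the semantics of $\rightarrow$ twice (clause~3 of the satisfaction definition): it suffices to treat both antecedents as hypotheses and derive the innermost consequent. This matches how the Augmentation soundness proof handles its single implication, and the two-level version is a routine extension.

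One can present this without even naming the runs, by observing that functional dependence is literally the statement that $\equiv_A$ refines $\equiv_B$ as a relation on runs, and that refinement is transitive; but I expect the paper to favor the explicit run-chasing style for uniformity with the preceding soundness proofs. Either way the proof is short and the content is entirely captured by the transitivity of equality on each channel value.
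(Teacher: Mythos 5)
Your proposal is correct and matches the paper's proof essentially verbatim: assume both hypotheses, take arbitrary runs $r, r' \in \rr(\pp)$ with $r \equiv_A r'$, apply the first hypothesis to get $r \equiv_B r'$, and the second to get $r \equiv_C r'$. The paper uses exactly this run-chasing style, with no additional subtleties.
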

\begin{proof}
Assume $\pp \vDash A\rhd B$ and $\pp \vDash B\rhd C$. Consider any two runs $r,r'\in \rr(\pp)$ such that $r\equiv_{A}r'$.
By the first assumption, $r\equiv_{B}r'$. By the second assumption, $r\equiv_{C}r'$. \qed
\end{proof}

\begin{theorem}[Gateway]\label{}
$\pp \vDash A\rhd B \rightarrow G \rhd B$, for any protocol $\pp$ and any gateway $G$ between sets $A$ and $B$.
\end{theorem}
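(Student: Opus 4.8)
The plan is to unfold the definitions and argue directly. Assuming $\pp \vDash A\rhd B$, I would take arbitrary runs $r,r'\in\rr(\pp)$ with $r\equiv_G r'$ and aim to show $r\equiv_B r'$, which is exactly $\pp \vDash G\rhd B$. The central device is to splice $r$ and $r'$ into a single \emph{hybrid run} $\hat r$ that agrees with $r$ on $A$ and with $r'$ on $B$. Once such a run is available, the hypothesis $A\rhd B$ forces $\hat r\equiv_B r$, while the construction gives $\hat r\equiv_B r'$; chaining these yields $r\equiv_B r'$, as desired. So the whole proof reduces to building $\hat r$ and checking it is a genuine run.

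To build $\hat r$, I would first cut the network along the gateway. Call a channel $c$ \emph{reachable from $A$ avoiding $G$} if there is a path $(c_1,\dots,c_m)$ with $c_1\in A$, $c_m=c$, and $c_i\notin G$ for every $i$; let $S$ be the set of all such channels, and set $\hat r(c)=r(c)$ for $c\in S$ and $\hat r(c)=r'(c)$ otherwise. Note that $S\cap G=\varnothing$, since any path witnessing $c\in S$ contains $c$ itself, which therefore cannot lie in $G$. The two agreement properties are then short and both lean on $r\equiv_G r'$: for $a\in A$, either $a\in S$, giving $\hat r(a)=r(a)$ outright, or $a\in G$, in which case $\hat r(a)=r'(a)=r(a)$; and for $b\in B$, no $b$ can lie in $S$, because a path from $A$ to $b$ avoiding $G$ would be precisely a path from $A$ to $B$ missing the gateway, contradicting Definition~\ref{gateway}, so $\hat r(b)=r'(b)$.

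The hard part, and the only step needing real care, is verifying that $\hat r$ satisfies every local condition $L_p$. The key is a ``no splitting at a party'' observation: if some channel incident with $p$ lies in $S$ and $c$ is any other channel incident with $p$, then appending $c$ to a path witnessing membership in $S$ produces a path from $A$ to $c$, so either $c\in S$ or $c\in G$. Consequently, at each party exactly one of two situations occurs. Either (i) none of its incident channels lies in $S$, so $\hat r$ coincides with $r'$ on all of them; or (ii) at least one incident channel lies in $S$, forcing every incident channel into $S\cup G$, so that $\hat r$ coincides with $r$ on all of them, using $r\equiv_G r'$ to handle the $G$-channels. In case (i) the local condition holds because $r'$ is a run, and in case (ii) because $r$ is a run.

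Thus $\hat r\in\rr(\pp)$, and the reduction in the first paragraph completes the argument. I expect the only genuine obstacle to be the local-condition check, and the gateway hypothesis is exactly what makes it go through: it guarantees that the partition of channels into the $S$-side and its complement never forces incompatible values to meet at a party except on $G$, where $r$ and $r'$ already agree.
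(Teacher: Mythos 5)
Your proof is correct and takes essentially the same approach as the paper: both arguments splice the two runs into a hybrid run along the gateway cut and verify the local conditions party by party (your set $S$ of channels reachable from $A$ avoiding $G$ is exactly the union of the components of $N$ minus $G$ that meet $A\setminus G$, i.e.\ a concrete choice of the paper's partition into an $A$-side $N'_A$ and a $B$-side $N'_B$), then apply $A\rhd B$ to the hybrid run. The only cosmetic difference is that the paper assigns the neutral components arbitrarily, whereas you send them all to the $B$-side.
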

\begin{proof}
Assume $\pp \vDash A\rhd B$ and consider any two runs $r_1,r_2\in \rr(\pp)$ such that $r_1\equiv_{G}r_2$. We will show that $r_1\equiv_{B}r_2$.
Consider the network $N'$ obtained by removing from $N$ all channels in set $G$. By the definition of a gateway, no single connected component of network $N'$ can contain channels from set $A\setminus G$ and set $B\setminus G$ at the same time. Let us divide all connected components of $N'$ into two subgraphs $N'_A$ and $N'_B$ such that $N'_A$ contains no channels from $B\setminus G$ and $N'_B$ contains no channels from $A\setminus G$. Components that do not contain channels from either $A\setminus G$ or $B\setminus G$ can be arbitrarily assigned to either $N'_A$ or $N'_B$.

Next, define a function $r$ on each $c \in Ch(N)$ as follows:
$$r(c)=\left\{\begin{array}{ll}
                            r_1(c)     & \mbox{ if $c\in N'_A$},\\
                            r_1(c)=r_2(c)\;\;\;\; & \mbox{ if $c\in G$},\\
                            r_2(c)     & \mbox{ if $c\in N'_B$}.
                            \end{array}
                     \right. $$
We will prove that $r$ is a run of protocol $\pp$. We need to show that $r$ satisfies the local conditions of protocol $\pp$ at each party $p$. The connected component of $N'$ containing a party $p$ either belongs to $N'_A$ or $N'_B$. Without loss of generality, assume that it belongs to $N'_A$. Thus, $Inc(p)$, the set of all channels in $N$ incident with party $p$, is a subset of $Ch(N'_A)\cup G$. Hence, $r\equiv_{Inc(p)}r_1$. Therefore, $r$ satisfies the local condition at party $p$ simply because $r_1$ does.

By the definition of $r$, we have $r\equiv_A r_1$ and $r\equiv_B r_2$. Together, the first of these statements and the assumption that $\pp\vDash A\rhd B$ imply that $r\equiv_B r_1$. Thus, due to the second statement, $r_1\equiv_B r \equiv_B r_2$. \qed

\end{proof}

\section{Completeness}
In this section, we demonstrate that the Logic of Secrets is complete with respect to the semantics defined above.  To do so, we first describe the construction of a protocol called $\pp_0$, which is implicitly parameterized by a collaboration network $N$ and a set $X$ of formulas in $\Phi(N)$.

\subsection{Protocol $\pp_0$}
Throughout this section, we will assume that $N$ is a fixed collaboration network, and $X \subseteq \Phi(N)$ is a fixed set of formulas.

\begin{definition}\label{control closure}
For any $A\subseteq Ch(N)$, we define $A^*$ to be the set of all channels $c\in Ch(N)$ such that $X\vdash_N A\rhd c$.
\end{definition}

\begin{theorem}\label{AsubA*}
$A\subseteq A^*$, for any $A\subseteq Ch(N)$.
\end{theorem}
\begin{proof}
Let $a\in A$. By the Reflexivity axiom, $\vdash_N A\rhd a$. Hence, $a\in A^*$.
\qed \end{proof}

\begin{theorem}\label{ArhdA*}
$X\vdash_N A\rhd A^*$, for any $A\subseteq Ch(N)$.
\end{theorem}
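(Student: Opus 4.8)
The plan is to prove $X\vdash_N A\rhd A^*$, where $A^*=\{c\in Ch(N): X\vdash_N A\rhd c\}$. The statement claims that $A$ functionally determines the \emph{entire} set of channels it determines individually, so the natural strategy is to bootstrap from the single-channel derivations to a derivation for the whole set. The key difficulty is that $A^*$ is defined channel-by-channel, and I must assemble these separate facts $X\vdash_N A\rhd c$ (one for each $c\in A^*$) into one joint dependence $X\vdash_N A\rhd A^*$.

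First I would observe that $A^*$ is a finite set, since $Ch(N)$ is finite; write $A^*=\{c_1,\dots,c_m\}$. For each $c_i$, by the definition of $A^*$ we have $X\vdash_N A\rhd c_i$. The main step is then to combine these via the Augmentation and Transitivity axioms, proceeding by induction on the number of elements accumulated so far. The engine of this induction is exactly the pattern already used in the examples (e.g. Proposition~\ref{square example theorem} and the hexagonal network proof): from $X\vdash_N A\rhd c_{i}$, Augmentation gives $X\vdash_N A,D\rhd c_{i},D$ for any auxiliary set $D$, and Transitivity then lets us chain such statements together.

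Concretely, I would show by induction on $k$ that $X\vdash_N A\rhd \{c_1,\dots,c_k\}$. The base case $k=0$ is $X\vdash_N A\rhd \varnothing$, which follows from Reflexivity since $A\supseteq\varnothing$ (or one may start at $k=1$ using $X\vdash_N A\rhd c_1$ directly). For the inductive step, suppose $X\vdash_N A\rhd \{c_1,\dots,c_k\}$. From the defining property we also have $X\vdash_N A\rhd c_{k+1}$. Augmentation applied to the latter with $C=\{c_1,\dots,c_k\}$ yields $X\vdash_N A,\{c_1,\dots,c_k\}\rhd c_{k+1},\{c_1,\dots,c_k\}$. Meanwhile Augmentation applied to the induction hypothesis with $C=A$ yields $X\vdash_N A\rhd A,\{c_1,\dots,c_k\}$, so by Transitivity $X\vdash_N A\rhd \{c_1,\dots,c_{k+1}\}$. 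Taking $k=m$ gives $X\vdash_N A\rhd A^*$.

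The step I expect to require the most care is the precise bookkeeping of which auxiliary set is fed into each Augmentation so that the two sides line up for Transitivity; this is the same union-juggling that makes the example proofs slightly fiddly, but it is entirely routine once the induction is set up. No genuinely hard idea is needed beyond recognizing that finiteness of $Ch(N)$ lets us reduce the set-valued conclusion to finitely many single-channel facts and then merge them with Armstrong's three original axioms.
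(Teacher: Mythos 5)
Your proposal is correct and follows essentially the same argument as the paper: induction on the number of channels of $A^*$ accumulated, with Reflexivity for the base case and a double application of Augmentation followed by Transitivity for the inductive step. The only (immaterial) difference is that you chain through the intermediate set $A,\{c_1,\dots,c_k\}$, whereas the paper chains through $A,a_{k+1}$; both are the same union-juggling pattern.
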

\begin{proof}
Let $A^*=\{a_1,\dots,a_n\}$. By the definition of $A^*$,  $X\vdash_N A\rhd a_i$, for any $i\le n$. We will
prove, by induction on $k$, that $X\vdash_N (A\rhd a_1,\dots,a_k)$ for any $0\le k\le n$. 

\noindent {\em Base Case}: $X\vdash_N A\rhd \varnothing$ by the Reflexivity axiom.

\noindent {\em Induction Step}: Assume that $X\vdash_N (A\rhd a_1,\dots,a_k)$. By the Augmentation axiom, 
\begin{equation}\label{eq0}
X\vdash_N A, a_{k+1}\rhd a_1,\dots,a_k,a_{k+1}.
\end{equation}
Recall that $X\vdash_N A\rhd a_{k+1}$. Again by the Augmentation axiom, $X\vdash_N (A\rhd A, a_{k+1})$.
Hence, $X\vdash_N (A \rhd a_1,\dots,a_k,a_{k+1})$, by (\ref{eq0}) and the Transitivity axiom.
\qed \end{proof}

We now proceed to define our protocol $\pp_0$.  We will first specify the set of values $V(c)$ for each channel $c \in Ch(N)$.  In this construction, the value of each channel $c$ on a particular run will be a function from the set $2^{Ch(N)}$ into the set $\{0,1\}$. 
Thus, for any $c\in Ch(N)$ and any $E\subseteq Ch(N)$, we have $r(c)(E)\in \{0,1\}$. 
We will find it more convenient, however, to think about $r$ as a two-argument Boolean function, where $r(c,E)\in \{0,1\}$. 

Furthermore, we will not allow the value of a channel on a particular run to be just {\it any} function from the set $2^{Ch(N)}$ into $\{0,1\}$.  Instead, for any channel $c$, we will restrict set $V(c)$ so that, for any run $r$, if $c\in E^*$, then $r(c,E)=0$. 
%This restriction can be formally enforced by restricting the possible values $V(c)$ to Boolean functions $f$ on sets such that $f(E)=0$ if $c\in E^*$.

To complete the description of protocol $\pp_0$, we will specify the local conditions for each party in the network.
At each party $p$, we define the local condition $L_p$ as
$$\forall E\subseteq Ch(N)\; \forall c,d\in (Inc(p)\setminus E^*)\;  \left(r(c, E)=r(d, E)\right).$$
That is, when two channels are incident with a party $p$ and neither channel is in $E^*$, the values of the functions assigned to those channels on argument $E$ must match on any given run.

To show that $\pp_0$ is indeed a protocol, we only need to show that it has at least one run. Indeed, 
the constant function $r(c,E)=0$ trivially satisfies the local condition at every party $p$.

Now that the definition of protocol $\pp_0$ is complete, we make the following two claims about its relationship to the given set of formulas $X$.

\begin{theorem}\label{th1}
If $\pp_0 \vDash A\rhd B$, then $X \vdash_N A\rhd B$.
\end{theorem}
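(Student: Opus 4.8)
The plan is to prove the contrapositive: assuming $X \nvdash_N A \rhd B$, I will exhibit two runs of $\pp_0$ that agree on $A$ but disagree on $B$, thereby witnessing $\pp_0 \nvDash A \rhd B$. The first step is to observe that $X \nvdash_N A \rhd B$ forces $B \not\subseteq A^*$. Indeed, if $B \subseteq A^*$, then Reflexivity gives $\vdash_N A^* \rhd B$, and combining this with $X \vdash_N A \rhd A^*$ (Theorem~\ref{ArhdA*}) via Transitivity would yield $X \vdash_N A \rhd B$, a contradiction. So I may fix a channel $b_0 \in B$ with $b_0 \notin A^*$.

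The heart of the argument is the choice of a run $r$ that differs from the all-zero run only in the single ``column'' $E = A$. Concretely, I would define $r(c, E) = 0$ whenever $E \neq A$, and for the column $E = A$ set
$$r(c, A) = \begin{cases} 0 & \text{if } c \in A^*, \\ 1 & \text{if } c \notin A^*. \end{cases}$$
Two verifications confirm $r \in \rr(\pp_0)$. The restriction defining $V(c)$, namely $c \in E^* \Rightarrow r(c,E)=0$, holds automatically when $E \neq A$ and holds for $E = A$ precisely because $r(c,A)$ was set to $0$ exactly on $A^* = E^*$. The local condition at each party $p$ is trivial for columns $E \neq A$ (all values are $0$); for $E = A$ it requires agreement only among channels in $Inc(p) \setminus A^*$, all of which receive the value $1$, so it holds as well.

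It then remains to compare $r$ with the constant-zero run $r_0$ (whose existence was already noted in the construction of $\pp_0$). Since $A \subseteq A^*$ by Theorem~\ref{AsubA*}, every $a \in A$ satisfies $r(a, A) = 0$, and $r(a, E) = 0$ for all other $E$; hence $r \equiv_A r_0$. On the other hand, $b_0 \notin A^*$ gives $r(b_0, A) = 1 \neq 0 = r_0(b_0, A)$, so $r \not\equiv_B r_0$. These two runs therefore witness $\pp_0 \nvDash A \rhd B$, completing the contrapositive. I expect the only nonroutine point to be the choice of construction itself — using the column indexed by $A$ and leaning on $A \subseteq A^*$ to force agreement on $A$ while the freedom outside $A^*$ supplies the separating bit on $B$ — since once this run is written down, both the run-verification and the final comparison are immediate.
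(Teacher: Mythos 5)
Your proof is correct and is essentially the paper's own argument presented in contrapositive form: you use the very same pair of runs (the all-zero run and the run equal to $1$ exactly on the column $E=A$ outside $A^*$), the same verification of the value restriction and local conditions, and the same appeal to Theorems~\ref{AsubA*} and~\ref{ArhdA*} with Reflexivity and Transitivity to relate $B\subseteq A^*$ to derivability of $A\rhd B$. The direct versus contrapositive packaging is the only difference, and it is immaterial.
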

\begin{proof} 
Assume $\pp_0 \vDash A\rhd B$ and consider two specific runs of $\pp_0$.
The first of these two runs will be the constant run $r_1(c,E)=0$. The second run is defined as
\begin{equation}\label{r2}
r_2(c,E)=\left\{\begin{array}{ll}
                            1     & \mbox{if $c\notin A^*$ and $E=A$},\\
                            0     & \mbox{if $c\in A^*$ or $E\neq A$}.
                            \end{array}
                     \right.
\end{equation}
To show that $r_2$ satisfies the local condition at a party $p$, consider any $E\subseteq Ch(N)$ and 
any $c,d\in Inc(p)\setminus E^*$. If $E\neq A$, then $r_2(c,E)=0=r_2(d,E)$. If $E=A$, then, since $c,d\in Inc(p)\setminus E^*$, we have 
$c,d\notin A^*$. Thus, $r_2(c,E)=1=r_2(d,E)$. Therefore, $r_2$ is a run of protocol $\pp_0$.

Notice that by Theorem~\ref{AsubA*}, $A\subseteq A^*$. Thus, by equality~(\ref{r2}), $r_2(a,E)=0$ for any $a\in A$ and any $E\subseteq Ch(N)$. Hence, $r_1(a,E)=0=r_2(a,E)$ for any $a\in A$ and $E\subseteq Ch(N)$. Thus, by the assumption that $\pp_0 \vDash A\rhd B$, we have $r_1(b,E)=r_2(b,E)$ for any $b\in B$ and $E\subseteq Ch(N)$. In particular, $r_1(b,A)=r_2(b,A)$ for any $b\in B$.
Since, by definition, $r_1(b,A)=0$, we get $r_2(b,A)=0$ for any $b\in B$. By the definition of $r_2$, this means that
$B\subseteq A^*$. By the Reflexivity axiom, $\vdash_N A^*\rhd B$. By Theorem~\ref{ArhdA*} and the Transitivity axiom, $X\vdash_N A\rhd B$.
\qed 

\end{proof}

\begin{theorem}\label{th2}
If $X \vdash_N A\rhd B$, then $\pp_0 \vDash A\rhd B$. 
\end{theorem}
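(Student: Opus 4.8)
The plan is to fix two runs $r,r'\in\rr(\pp_0)$ with $r\equiv_A r'$ and prove directly that $r\equiv_B r'$, working channel-by-channel and argument-by-argument in the two-variable Boolean representation of values. The first step reduces everything to the closure operator: from the hypothesis $X\vdash_N A\rhd B$ I would derive $B\subseteq A^*$. Indeed, for each $b\in B$ the Reflexivity axiom gives $X\vdash_N B\rhd b$, and combining this with $X\vdash_N A\rhd B$ via the Transitivity axiom yields $X\vdash_N A\rhd b$, i.e. $b\in A^*$ by Definition~\ref{control closure}. Thus it suffices to show that $r(b,E)=r'(b,E)$ for every $b\in A^*$ and every $E\subseteq Ch(N)$.

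Fix such a $b$ and $E$. The argument splits into two cases according to the definition of $\pp_0$. If $b\in E^*$, then the restriction placed on $V(b)$ forces $r(b,E)=0=r'(b,E)$, and we are done immediately. The interesting case is $b\notin E^*$, and here the idea is to connect $b$ back to the set $A$ by a path of channels none of which lies in $E^*$, so that the local conditions of $\pp_0$ can propagate the equality $r\equiv_A r'$ along that path. Concretely, if $(c_1,\dots,c_n)$ is a path with $c_1\in A$, $c_n=b$, and no $c_i\in E^*$, then each adjacent pair $c_i,c_{i+1}$ is incident with a common party while neither channel lies in $E^*$, so the local condition gives $r(c_i,E)=r(c_{i+1},E)$; chaining these equalities yields $r(b,E)=r(c_1,E)$, and likewise $r'(b,E)=r'(c_1,E)$, while $c_1\in A$ gives $r(c_1,E)=r'(c_1,E)$. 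Hence $r(b,E)=r'(b,E)$.

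The main obstacle, and the only place where the new axiom is used, is establishing the existence of such an $E^*$-avoiding path from $A$ to $b$. I would argue by contradiction: if no such path existed, then every path from $A$ to $\{b\}$ would meet $E^*$, so by Definition~\ref{gateway} the set $E^*$ would be a gateway between $A$ and $\{b\}$. Since $b\in A^*$ means $X\vdash_N A\rhd b$, the Gateway axiom would give $X\vdash_N E^*\rhd b$, and therefore $b\in (E^*)^*$. But the closure is idempotent: by Theorem~\ref{ArhdA*} we have $X\vdash_N E\rhd E^*$, and together with $X\vdash_N E^*\rhd b$ the Transitivity axiom yields $X\vdash_N E\rhd b$, i.e. $b\in E^*$, contradicting the assumption $b\notin E^*$. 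This contradiction secures the path and completes the proof.
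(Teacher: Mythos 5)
Your proof is correct and follows essentially the same route as the paper's: both reduce $A\rhd B$ to single channels via Reflexivity and Transitivity, propagate equality of values along a path avoiding $E^*$ using the local conditions of $\pp_0$, and, when no such path exists, invoke the Gateway axiom together with Theorem~\ref{ArhdA*} and Transitivity to conclude $b\in E^*$, forcing the value $0$. The only differences are organizational (you argue directly and split on $b\in E^*$ first, while the paper argues by contradiction and splits on path existence first), not mathematical.
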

\begin{proof}
Assume that $X \vdash_N A\rhd B$, but $\pp_0 \nvDash A\rhd B$. Thus, there are runs $r_1$ and $r_2$ of $\pp_0$ such that $r_1(a,E)=r_2(a,E)$ for any $a\in A$ and any $E\subseteq Ch(N)$, but there is $b_0\in B$ and $E_0\subseteq Ch(N)$
such that 
\begin{equation}\label{runs not equal}
r_1(b_0,E_0) \neq r_2(b_0,E_0).
\end{equation}
First, assume that network $N'$, obtained from $N$ by the removal of all channels in set $E^*_0$, contains a path $\pi$ connecting channel $b_0$ with a channel $a_0\in A$. Thus, this case implicitly assumes that $b_0,a_0\notin E^*_0$. Let functions $f_1$ and $f_2$ on the channels of network $N$ be defined as $f_1(c)=r_1(c,E_0)$ and $f_2(c)=r_2(c,E_0)$. Due to the local conditions of protocol $\pp_0$, all channels along path $\pi$  must have the same value  of function $f_1$. The same is also true about function $f_2$. Therefore,
$
r_1(b_0,E_0) = f_1(b_0)=f_1(a_0)=r_1(a_0,E_0)=r_2(a_0,E_0)=f_2(a_0)=f_2(b_0)=r_2(b_0,E_0).
$
This is a contradiction with statement (\ref{runs not equal}).

Next, suppose that there is no path in $N'$ connecting $b_0$ with a channel in $A$. Thus, set $E^*_0$ is a gateway between sets $A$ and $\{b_0\}$. By the \gateway axiom,
\begin{equation}\label{eq1}
\vdash_N A\rhd b_0 \rightarrow E^*_0 \rhd b_0.
\end{equation}
By the Reflexivity axiom, $\vdash_N B\rhd b_0$. Recall the assumption $X \vdash_N A\rhd B$. Thus, by the Transitivity axiom, $X \vdash_N A\rhd b_0$. Taking into account (\ref{eq1}), $X \vdash_N E_0^*\rhd b_0$. By Theorem~\ref{ArhdA*}, $\vdash E_0\rhd E^*_0$. Hence, again by Transitivity, 
$X \vdash_N E_0\rhd b_0$. Thus, by Definition~\ref{control closure}, $b_0 \in E_0^*$. Hence, by the definition of protocol $\pp_0$, $r(b_0,E_0)$ has value 0 for any run $r$. Therefore, $r_1(b_0,E_0)=0=r_2(b_0,E_0)$. This is a contradiction with statement (\ref{runs not equal}).
\qed 
\end{proof}

\subsection{Main Result}
Now, we are ready to finish the proof of completeness.

\begin{theorem}\label{}
If $\nvdash_N \phi$, then there is a finite protocol ${\cal P}$ such that ${\cal P}\nvDash \phi$. 
\end{theorem}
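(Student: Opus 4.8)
The plan is to prove the contrapositive in the standard Henkin-style way for these logics. We are given $\nvdash_N \phi$, and we want a finite protocol $\pp$ with $\pp \nvDash \phi$. The natural candidate for $\pp$ is exactly the protocol $\pp_0$ constructed in the previous subsection, instantiated with a carefully chosen set of formulas $X$. Since $\pp_0$ was built relative to an arbitrary $X \subseteq \Phi(N)$, and Theorems~\ref{th1} and~\ref{th2} establish that $\pp_0 \vDash A \rhd B$ if and only if $X \vdash_N A \rhd B$, the protocol $\pp_0$ faithfully reflects $\rhd$-derivability from $X$. So the first step is to choose $X$ to be a maximal consistent set of formulas in $\Phi(N)$ that contains $\neg\phi$ (or, equivalently, that does not derive $\phi$); such a set exists by a Lindenbaum-style extension argument, using the fact that $\nvdash_N \phi$ means $\{\neg\phi\}$ is consistent.

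Next I would argue that $\pp_0$ (built from this maximal consistent $X$) satisfies $\pp_0 \vDash \psi \iff X \vdash_N \psi \iff \psi \in X$ for every formula $\psi \in \Phi(N)$. This is a routine induction on the structure of $\psi$. The atomic case $\psi = (A \rhd B)$ is precisely the content of Theorems~\ref{th1} and~\ref{th2} combined, which give $\pp_0 \vDash A \rhd B \iff X \vdash_N A \rhd B$; and since $X$ is maximal consistent, $X \vdash_N (A \rhd B) \iff (A \rhd B) \in X$. The case $\psi = \bot$ is immediate since $\pp_0 \nvDash \bot$ and $\bot \notin X$ by consistency. The implication case $\psi = \phi_1 \rightarrow \phi_2$ follows from the semantic clause for $\rightarrow$ together with the standard maximal-consistent-set properties (closure under Modus Ponens and the fact that exactly one of $\chi, \neg\chi$ lies in $X$ for each $\chi$). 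Applying this equivalence to $\psi = \phi$, and recalling $\phi \notin X$, gives $\pp_0 \nvDash \phi$, as desired.

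Finally, I must address the word \emph{finite} in the statement, since $\pp_0$ as defined assigns to each channel a set of values drawn from functions $2^{Ch(N)} \to \{0,1\}$, and this value set is finite only because $Ch(N)$ is finite: there are at most $2^{2^{|Ch(N)|}}$ such functions per channel. Because $N$ is a finite graph, $Ch(N)$ is finite, so each $V(c)$ is finite and $\pp_0$ is a finite protocol in the sense of Definition~\ref{rank}. Hence $\pp_0$ is the required finite witness.

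\textbf{The main obstacle} I expect is not any single hard computation but rather making the truth-lemma induction fully precise given that the two atomic theorems are phrased as one-directional implications conditioned on an \emph{arbitrary} $X$; I must be careful that the Lindenbaum extension is carried out in the logic $\vdash_N$ for the fixed network $N$ (so that the Gateway axiom instances available to $X$ are exactly those for $N$), and that maximal consistency is used precisely where needed to convert $\vdash_N$ into membership and to handle the Boolean connectives. The finiteness bookkeeping is the other point that, while elementary, must be stated explicitly to justify the word \emph{finite} in the theorem.
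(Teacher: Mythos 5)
Your proposal is correct and follows essentially the same route as the paper: Lindenbaum extension of $\{\neg\phi\}$ to a maximal consistent set $X$, construction of $\pp_0$ from that $X$, a truth-lemma induction whose base case is exactly Theorems~\ref{th1} and~\ref{th2}, and evaluation at $\phi$ (the paper uses $\neg\phi$, which is equivalent). Your explicit justification of finiteness (each $V(c)$ has at most $2^{2^{|Ch(N)|}}$ elements since $Ch(N)$ is finite) is a detail the paper leaves implicit, but it is the same argument.
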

\begin{proof}
Assume $\nvdash_N \phi$. Let $X$ be a maximal consistent set of formulas such that $\neg\phi\in X$. Consider the finite  protocol $\pp_0$ parameterized by network $N$ and set of formulas $X$. 
We will show that for any formula $\psi$, $X\vdash_N \psi$ if and only if $\pp_0 \vDash \psi$ by induction on the structural complexity
of formula $\psi$. The base case follows from Theorems~\ref{th1} and~\ref{th2}. The induction case follows from the maximality and consistency of set $X$. To finish the proof of the theorem, select $\psi$ to be $\neg\phi$.
\qed \end{proof}

\begin{corollary}\label{}
Binary relation $\vdash_N\phi$ is decidable.
\end{corollary}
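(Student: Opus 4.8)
The plan is to combine the soundness results of Section~\ref{sec:sound} with the finite model property established in the completeness theorem above, and then invoke the standard fact that a binary relation which is simultaneously recursively enumerable and whose complement is recursively enumerable must be decidable. First I would record the semantic characterization that these two results jointly provide: by soundness, $\vdash_N\phi$ implies $\pp\vDash\phi$ for every protocol $\pp$, while the completeness theorem guarantees that $\nvdash_N\phi$ implies the existence of a \emph{finite} protocol $\pp$ with $\pp\nvDash\phi$. Taking contrapositives, these combine into the equivalence that $\nvdash_N\phi$ holds if and only if some finite protocol over $N$ falsifies $\phi$.

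Next I would argue that $\vdash_N\phi$ is recursively enumerable. For a fixed finite network $N$, the set of axioms is decidable: membership in the Reflexivity, Augmentation, and Transitivity schemas is a matter of pattern matching on finite sets of channels, propositional tautologies are decidable, and the side condition of the \gateway axiom is decidable as well, since whether $G$ is a gateway between $A$ and $B$ can be tested by deleting the channels of $G$ from $N$ and checking, by finite graph reachability, that no channel of $A\setminus G$ remains connected to a channel of $B\setminus G$. Because the axiom set is decidable and Modus Ponens is an effective rule, the set of derivable formulas is recursively enumerable via enumeration of all finite proofs.

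Then I would argue that $\nvdash_N\phi$ is recursively enumerable. A finite protocol over $N$ is specified by a finite value set $V(c)$ for each of the finitely many channels $c\in Ch(N)$ together with, for each party $p$, a local condition $L_p$ that is just a Boolean table on the finite product of the value sets of the channels incident with $p$; hence finite protocols form an effectively enumerable family. For each such $\pp$ the run set $\rr(\pp)$ is finite and computable, so $\pp\vDash A\rhd B$ is decidable by inspecting all pairs $r,r'\in\rr(\pp)$ against Definition~\ref{dependence}, and $\pp\vDash\phi$ is then decidable by evaluating the propositional structure of $\phi$. Enumerating finite protocols and halting upon the first falsifier of $\phi$ therefore semi-decides $\nvdash_N\phi$, by the equivalence of the first paragraph.

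With both $\vdash_N\phi$ and its complement recursively enumerable, decidability follows by running the two semi-decision procedures in parallel; for any given $\phi$ exactly one of them halts. The only real subtleties are the two effectiveness checks above, the most delicate being the decidability of the \gateway side condition; the genuine mathematical content, namely the finite model property, has already been supplied by the completeness theorem. I would finally remark that the construction of $\pp_0$ in fact bounds the falsifying protocol, since its channel values range over functions $2^{Ch(N)}\to\{0,1\}$, so the search over finite protocols can be replaced by a single terminating search over protocols of bounded size, and that the same argument decides $X\vdash_N\phi$ for any finite $X$ after reducing it, via the deduction theorem, to the theoremhood of $\bigwedge X\to\phi$.
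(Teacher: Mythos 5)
Your proposal is correct and follows essentially the same route as the paper, which proves the corollary in one sentence by combining recursive enumerability of the theorems with completeness relative to \emph{finite} protocols (giving co-recursive-enumerability via enumeration and model-checking of finite protocols). You simply spell out the effectiveness details---decidability of the \gateway side condition, enumerability of finite protocols, decidability of $\pp\vDash\phi$ for finite $\pp$---that the paper leaves implicit.
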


\begin{proof}
This statement follows from the completeness of the Logic of Secrets with respect to {\em finite} protocols and the recursive enumerability of all theorems in the logic. \qed
\end{proof}

\section{Conclusion}

We have presented a complete axiomatization of the properties of the functional dependence relation over secrets on collaboration networks.  In light of previous results capturing properties of the independence relation in the same setting~\cite{mn09a}, it would be interesting to describe properties that connect these two predicates on collaboration networks.  

An example of such a property for the network $N_6$ in Figure~\ref{linear_graph} is given in the following theorem.
\begin{figure}[htbp]
   \centering
	\scalebox{.5}{\includegraphics{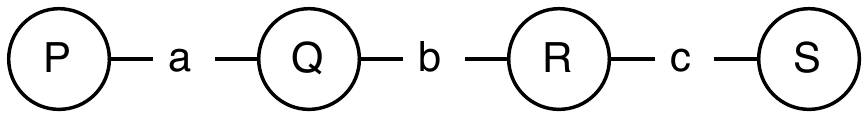}}
   \caption{Network $N_6$.}
   \label{linear_graph}
\end{figure}
\begin{theorem}\label{concl thm}
For any protocol $\pp$ over network $N_6$, 
$$\pp \vDash (a,b\rhd c) \wedge (a \parallel b) \rightarrow b \rhd c.$$
\end{theorem}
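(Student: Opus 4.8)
The plan is to argue \emph{semantically}, since the predicate $\parallel$ is not part of the deductive calculus of this paper and so the statement cannot be obtained by a formal derivation. Reading the figure by analogy with the linearly labelled network $N_4$, I take $N_6$ to be the three-edge path in which channel $b$ lies between $a$ and $c$; the decisive structural fact is then that deleting $b$ separates the graph into a component containing $a$ and a disjoint component containing $c$, i.e.\ $\{b\}$ is a gateway between $\{a\}$ and $\{c\}$ in the sense of Definition~\ref{gateway}. I expect this separation to carry the whole proof, so assuming $\pp\vDash a,b\rhd c$ and $\pp\vDash a\parallel b$, I would unfold the goal $b\rhd c$ via Definition~\ref{dependence}: fix $r,r'\in\rr(\pp)$ with $r\equiv_{\{b\}}r'$ and aim to show $r\equiv_{\{c\}}r'$.

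The central step is to build a single auxiliary run that follows $r'$ on the $a$-side of the cut and $r$ on the $c$-side, exactly as in the run-splicing construction used in the soundness proof of the \gateway axiom. I would define $s$ to equal $r'$ on every channel in the component of $a$, to take the common value $r(b)=r'(b)$ on $b$, and to equal $r$ on every channel in the component of $c$. Because each party of $N_6$ is incident only with channels lying on a single side of the cut together with $b$, the restriction of $s$ to any party's incident channels coincides with the restriction of either $r$ or $r'$; since both are runs, every local condition is met and $s\in\rr(\pp)$. By construction $s\equiv_{\{a,b\}}r'$ while $s(c)=r(c)$.

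With $s$ in hand the hypotheses close the argument: as $s$ and $r'$ agree on both $a$ and $b$, the assumption $\pp\vDash a,b\rhd c$ gives $s(c)=r'(c)$, and combining this with $s(c)=r(c)$ yields $r(c)=r'(c)$, as required. The natural place for independence is in manufacturing an auxiliary run with a prescribed $(a,b)$-combination: $a\parallel b$ guarantees that $r'$'s value on $a$ can be realised together with $r$'s value on $b$, which is what one would cite to legitimise mixing the two runs across the cut.

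The step I expect to be the main obstacle is precisely the verification that $s$ is a genuine run, since this is where the specific topology of $N_6$ is essential: if $a$ and $c$ remained connected after deleting $b$, one could relay $a$ onto $c$ while keeping $b$ free, producing a protocol that satisfies both antecedents yet violates $b\rhd c$. Thus pinning down the gateway property of $N_6$ is the crux. A secondary point worth getting exactly right is the accounting of which hypothesis is invoked where; in particular I would scrutinise whether $a\parallel b$ is genuinely needed or whether the gateway separation already forces the conclusion on its own, so that the proof honestly exhibits an interaction between the two predicates rather than a consequence of the topology alone.
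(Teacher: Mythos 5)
Your proof is correct, and it takes a genuinely different route from the paper's --- in fact a stronger one. The paper uses the independence hypothesis: from runs $r_1,r_2$ with $r_1(b)=r_2(b)$ it invokes $a\parallel b$ to obtain a third run $r_3$ with $r_3(a)=r_1(a)$ and $r_3(b)=r_2(b)$, and then splices $r_3$ with $r_2$ at party $Q$ to produce a run $r_4$ agreeing with $r_1$ on $\{a,b\}$ and with $r_2$ on $c$. You splice the two given runs directly across the cut, which is legitimate precisely because the hypothesis being discharged already gives $r(b)=r'(b)$; your $s$ plays the role of $r_4$ with no detour through an intermediate run, and so your argument never touches $a\parallel b$ at all (despite your remark that independence is what would license the mixing --- in your construction the local-condition check licenses it by itself). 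This settles, affirmatively, the question you flag at the end: the independence hypothesis is redundant for this topology. Indeed $\{b\}$ is a gateway between $\{a,b\}$ and $\{c\}$ in the sense of Definition~\ref{gateway} (any path from $a$ to $c$ must cross $b$, and any path from $b$ to $c$ starts in $\{b\}$; the definition does not require the sets to be disjoint), so $(a,b\rhd c)\rightarrow(b\rhd c)$ is an instance of the \gateway axiom and is even formally derivable in the Logic of Secrets; your semantic argument is exactly the soundness proof of that axiom instance specialized to $N_6$. What each approach buys: yours is more economical and yields the stronger statement, but it thereby exposes that this theorem does not genuinely exhibit an interaction between $\rhd$ and $\parallel$, as the conclusion section suggests; the paper's route buys only the illustration of how $\parallel$ can be used to manufacture a run with a prescribed combination of values. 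Your reading of the figure is also right ($N_6$ is the path $P$--$a$--$Q$--$b$--$R$--$c$--$S$); incidentally, in the paper's own verification at party $Q$, the glued run matches $r_3$ there (via $r_2(b)=r_3(b)$), not $r_2$ as written.
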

\begin{proof}
For any two runs $r_1,r_2 \in \rr(\pp)$ where $r_1(b) = r_2(b)$, we must show that $r_1(c) = r_2(c)$.  The assumption $a \parallel b$ guarantees that values $r_1(a)$ and $r_2(b)$ coexist in some run in $\rr(\pp)$; call this run $r_3$.  Thus, we have $r_3(a)= r_1(a)$ and $r_3(b) =r_2(b)$.  

Next, we create a new function $r_4$ which ``glues" together runs $r_3$ and $r_2$ at party $Q$.  Formally, we define $r_4$ as 
$$r_4(x)=\left\{\begin{array}{ll}
                            r_3(x)     & \mbox{ if $x=a$},\\
                            r_2(x)     & \mbox{ if $x\in \{b,c\}$}.
                            \end{array}
                     \right. $$

We claim that function $r_4$ satisfies the local conditions of protocol $\pp$, since at each party in $N_5$, it behaves locally like an existing run.  Indeed, at party $P$, $r_4$ matches run $r_3$, and at parties $R$ and $S$, $r_4$ matches run $r_2$.  At party $Q$, $r_4$ matches $r_2$ exactly, since $r_4(b)=r_2(b)$.  Thus, $r_4 \in \rr(\pp)$.  To complete the proof, we note that $r_1(a)=r_3(a)=r_4(a)$ and $r_1(b)=r_2(b)=r_4(b)$.  By the assumption that $(a,b\rhd c)$, we have $r_1(c) =r_4(c)$.  The definition of $r_4$ is such that $r_4(c)=r_2(c)$, so $r_1(c)=r_2(c)$, as desired.\qed 
\end{proof}

A complete axiomatization of properties that connect the functional dependence relation and the independence relation between secrets on a collaboration network remains an open problem. 

\section{Acknowledgment}

The authors would like to thank Andrea Mills and Benjamin Sapp for discussions of the functional dependence relation on sets of secrets during earlier stages of this work.

\bibliography{../sp}

\end{document}